\newtheorem{theorem}{Theorem}
\newtheorem{lemma}[theorem]{Lemma}
\newtheorem{definition}[theorem]{Definition}
\def\@endtheorem{\endtrivlist}
\newcounter{brule}
\newenvironment{brule}{\refstepcounter{brule}\par\smallskip\noindent
\textbf{(B\arabic{brule})}\quad}{}
\newcommand{\currentrule}{B\arabic{brule}}
\newcommand{\Fmina}[1]{\mathcal{F}(#1)}
\newcommand{\Fmin}[1]{\mathcal{F}^*(#1)}
\newcommand{\Fmine}[1]{\mathcal{F}_\mathrm{edit}(#1)}
\newcommand{\Pmid}[1]{P_{\mathrm{mid}}(#1)}
\newcommand{\Pother}[1]{P_{\mathrm{other}}(#1)}
\begin{document}

\title{Faster algorithms for cograph edge modification problems}
\author{Dekel Tsur%
\thanks{Ben-Gurion University of the Negev.
Email: \texttt{dekelts@cs.bgu.ac.il}}}
\date{}
\maketitle

\begin{abstract}
In the \textsc{Cograph Deletion} (resp., \textsc{Cograph Editing}) problem
the input is a graph $G$ and an integer $k$,
and the goal is to decide whether there is a set of edges of size at most $k$
whose removal from $G$ (resp., removal and addition to $G$) results in a graph
that does not contain an induced path with four vertices.
In this paper we give algorithms for \textsc{Cograph Deletion} and
\textsc{Cograph Editing} whose running times are $O^*(2.303^k)$ and
$O^*(4.329^k)$, respectively.
\end{abstract}

\paragraph{Keywords} graph algorithms, parameterized complexity,
branching algorithms.

\section{Introduction}
A graph $G$ is called a \emph{cograph} if it does not contain an induced
$P_4$, where $P_4$ is a path with 4 vertices.
In the \textsc{Cograph Deletion} (resp., \textsc{Cograph Editing}) problem
the input is a graph $G$ and an integer $k$,
and the goal is to decide whether there is a set of edges of size at most $k$
whose removal from $G$ (resp., removal and addition to $G$) results in
a cograph.
The \textsc{Cogaph Deletion} and \textsc{Cogaph Editing} problems can be solved
in $O^*(3^k)$ and $O^*(6^k)$ time (where $O^*(f(k)) = O(f(k)\cdot n^{O(1)})$), respectively, using simple branching algorithms~\cite{cai1996fixed}.
Nastos and Gao~\cite{nastos2012bounded} gave an algorithm with
$O^*(2.562^k)$ running time for \textsc{Cogaph Deletion} and
Liu et al.~\cite{liu2012complexity} gave an algorithm for
\textsc{Cogaph Editing} with $O^*(4.612^k)$ running time.
In this paper, we give algorithms for \textsc{Cograph Deletion} and
\textsc{Cogaph Editing}  whose time complexities are $O^*(2.303^k)$ and
$O^*(4.329^k)$, respectively.

\paragraph{Preleminaries}
For a set $S$ of vertices in a graph $G$, $G[S]$ is the subgraph
of $G$ induced by $S$ (namely, $G[S]=(S,E_S)$ where
$E_S = \{(u,v) \in E(G) \colon u,v \in S\}$).
For a graph $G = (V,E)$ and a set $F \subseteq E$ of edges,
$G-F$ is the graph $(V,E \setminus F)$.
For a graph $G = (V,E)$ and a set $F$ of pairs of vertices,
$G \triangle F$ is the graph $(V, (E \setminus F) \cup (F \setminus E))$.

For a graph $G$, a set $F$ of edges is called a \emph{deletion set} of $G$
if $G-F$ is a cograph.
%We denote by $\alpha(G)$ the minimum size of a deletion set of $G$.
%Let $\{G}$ be a set containing every inclusion minimal deletion
%set of $G$.
A set $F$ of pairs of vertices is called an \emph{editing set} of $G$
if $G \triangle F$ is a cograph.
%We denote by $\alpha(G)$ the minimum size of a deletion set of $G$.
%Let $\Fmine{G}$ be a set containing every inclusion minimal editing
%set of $G$.

\section{Algorithm for Cogaph Deletion}\label{sec:deletion}
A graph $G$ is called \emph{$P_4$-sparse} if for every set $X$ of 5 vertices,
the graph $G[X]$ has at most one induced $P_4$.
A graph $G$ is $P_4$-sparse if and only if it does not contain one of the
graphs of Figure~\ref{fig:forbidden} as an induced subgraph.
Jamison and Olariu~\cite{jamison1992tree} showed that a $P_4$-sparse
graph admits a recursive decomposition.
To describe this decomposition, we first need the following definition.

\begin{figure}
\centering
\subfigure[$P_5$]{\includegraphics[scale=0.9]{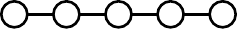}}
\subfigure[Pan\label{fig:pan}]{\includegraphics[scale=0.9]{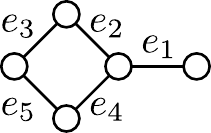}}
\subfigure[Kite\label{fig:kite}]{\includegraphics[scale=0.9]{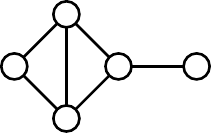}}
\subfigure[$C_5$\label{fig:C5}]{\includegraphics[scale=0.9]{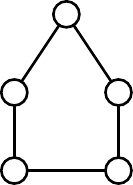}}
\subfigure[$\overline{P_5}$]{\includegraphics[scale=0.9]{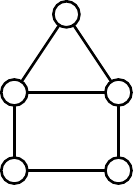}}
\subfigure[Fork\label{fig:fork}]{\includegraphics[scale=0.9]{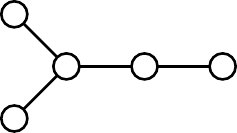}}
\subfigure[Co-pan]{\includegraphics[scale=0.9]{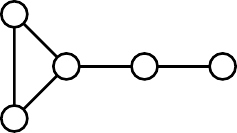}}
\caption{Forbidden induced subgraphs for $P_4$-sparse graphs.\label{fig:forbidden}}
\end{figure}

\begin{definition}\label{def:almost-spider}
A graph $G$ is a \emph{spider} if the vertices
of $G$ can be partitioned into disjoint sets $S$, $K$, and $R$ such that
\begin{enumerate}
\item
$|S| = |K| \geq 2$.
\item
$S$ is an independent set and $K$ is a clique.
\item
Every vertex in $R$ is adjacent to all the vertices in $K$ and
not adjacent to all the vertices in $S$.
\item\label{enu:legs}
There is a bijection $\varphi \colon S \to K$ such that
one of the following two cases occurs.
\begin{enumerate}
\item
$N(s) \cap K = \{ \varphi(s) \}$ for every $s \in S$.
\item
$N(s) \cap K = K \setminus \{ \varphi(s) \}$ for every $s \in S$.
\end{enumerate}
\end{enumerate}
\end{definition}

The recursive decomposition of $P_4$-sparse graphs is based on the following
theorem from~\cite{jamison1992tree}.
\begin{theorem}\label{thm:decomposition-1}
Let $G$ be a $P_4$-sparse graph with at least 2 vertices.
Then exactly one of the following cases occurs.
\begin{enumerate}
\item
$G$ is not connected.
\item
$\overline{G}$ is not connected.
\item
$G$ is a spider.
\end{enumerate}
\end{theorem}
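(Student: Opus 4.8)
The plan is to establish the two halves of ``exactly one'' separately: that the three cases are pairwise incompatible, and that at least one of them always holds. For incompatibility I would use two observations. First, for any graph on at least one vertex, $G$ and $\overline{G}$ cannot both be disconnected: if $G$ is disconnected then any two vertices are joined in $\overline{G}$, either directly (if they lie in different components of $G$) or through a vertex of a third component, so $\overline{G}$ is connected; this rules out cases~1 and~2 occurring together. Second, I would verify from Definition~\ref{def:almost-spider} that every spider is connected and co-connected, which makes case~3 incompatible with both others. Connectedness is immediate: $K$ is a clique, each $s\in S$ has a neighbour in $K$ (namely $\varphi(s)$ in case~(a), or any vertex of $K\setminus\{\varphi(s)\}\neq\emptyset$ in case~(b), using $|K|\ge2$), and each $r\in R$ is complete to $K$. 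Co-connectedness follows because complementation sends a spider to a spider: it turns $S$ into a clique and $K$ into an independent set, keeps $R$ complete to the new clique and anticomplete to the new independent set, and swaps leg-cases (a) and (b); hence $\overline{G}$ is again a spider and therefore connected.

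For existence I would assume $G$ is connected and co-connected (otherwise case~1 or~2 holds) and prove it is a spider. Such a $G$ is not a cograph, since a connected co-connected cograph on at least two vertices cannot exist (a connected cograph on $\ge2$ vertices is a join and hence co-disconnected), so $G$ contains an induced $P_4$ and thus has at least four vertices. To control the global structure I would pass to the modular decomposition of $G$, where a \emph{module} is a vertex set $M$ such that every vertex outside $M$ is adjacent to all or none of $M$. Because $G$ is connected and co-connected, the root of its decomposition is a prime node, so the quotient $H$ on the maximal strong modules is prime (connected, co-connected, and with only trivial modules); taking one representative per module realises $H$ as an induced subgraph of $G$, so $H$ is $P_4$-sparse. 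It then suffices to prove that a prime $P_4$-sparse graph is a spider with $|R|\le1$, and to lift this back: a routine check shows that each module sitting over an $S$- or $K$-vertex of $H$ must be a single vertex (a larger one would create two induced $P_4$'s on five vertices), while the module over the lone $R$-vertex of $H$, if present, may be arbitrary and becomes the body $R$ of $G$ — matching Definition~\ref{def:almost-spider}, which imposes no condition on the edges inside $R$.

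The main obstacle is the prime case, where the guiding picture becomes exact: in a prime $P_4$-sparse graph every induced $P_4$ has its two inner vertices destined for $K$ and its endpoints for $S$. I would fix an induced $P_4$, place its inner pair in $K$ and its endpoints in $S$, and then classify each remaining vertex by its adjacencies to these four, ruling out every inconsistent pattern by exhibiting one of the seven graphs of Figure~\ref{fig:forbidden}. This case analysis is the real work: a vertex adjacent to some but not all of the emerging clique, a vertex adjacent to an independent-set vertex it must avoid, or a thin leg coexisting with a thick leg each force a $P_5$, a house, a $C_5$, a pan (Figure~\ref{fig:pan}), a co-pan, a kite (Figure~\ref{fig:kite}), or a fork (Figure~\ref{fig:fork}). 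The bookkeeping is halved by the fact that $P_4$-sparseness is invariant under complementation and its forbidden set is closed under it — the pairs being $P_5$/house, pan/co-pan and kite/fork, with $C_5$ self-complementary — so a defect and its complement are excluded by complementary forbidden graphs.

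Once all vertices are consistently sorted, all legs share one type, the resulting pairing between $S$ and $K$ defines the bijection $\varphi$, and since the initial $P_4$ already furnishes two vertices apiece we obtain $|S|=|K|\ge2$; thus $H$ is a spider with $|R|\le1$. Combining this with the lifting step promotes the structure to $G$, so $G$ is a spider, and together with the incompatibility argument this proves that exactly one of the three cases occurs.
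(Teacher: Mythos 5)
The paper never proves this theorem; it imports it from~\cite{jamison1992tree}, and the proof it implicitly relies on (whose skeleton is visible in the paper's proof of Theorem~\ref{thm:decomposition-2}) is the direct one: fix a set $A$ inducing a $P_4$ with $|P(A)|$ maximized, partition the remaining vertices into $I(A)$, $T(A)$, $P(A)$, and use Observations 2.3--2.15 --- the places where $P_4$-sparseness enters, together with $\Pother{A}=\emptyset$ --- to force the spider structure. Your route is genuinely different: you get the ``exactly one'' half from the fact that the complement of a spider is a spider (correct: complementation swaps $S$ and $K$ and swaps leg cases (a) and (b) of Definition~\ref{def:almost-spider}), and you get existence by passing to the modular decomposition: a connected, co-connected graph has a prime quotient $H$, which is $P_4$-sparse because it embeds as an induced subgraph. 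Your lifting step is also sound: $R$ is a module, so primality of $H$ forces $|R|\le 1$; a module of size $\ge 2$ over an $S$- or $K$-vertex creates two induced $P_4$'s on five vertices; and the module over the head vertex is unconstrained, since the definition of a spider puts no condition on $G[R]$. This decomposition-based argument is cleaner and more modern than the one the paper cites, and it localizes all the difficulty into one statement: a prime $P_4$-sparse graph is a spider with $|R|\le 1$.

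That statement, however, is exactly where your proof stops being a proof. You call the prime case ``the real work,'' announce a classification of vertices by their adjacencies to a fixed $P_4$, and assert that every inconsistent pattern is killed by one of the seven graphs of Figure~\ref{fig:forbidden} --- but none of the cases is carried out, and the sketch as stated is not quite enough even as a plan: classifying vertices against a single $P_4$ does not by itself yield global consistency (all legs of the same type, $|S|=|K|$, at most one head vertex); you need an additional device, such as the extremal choice of $A$ in Jamison--Olariu's proof or, in your prime setting, the absence of nontrivial modules (e.g.\ no two vertices with the same neighborhood). In effect you have proved the easy reductions and deferred the theorem's entire content, which in the cited proof is precisely what Observations 2.3--2.15 accomplish. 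One further remark worth making: even if you completed the prime case, your proof would black-box exactly the machinery this paper needs to expose. The paper's contribution (Rules~(B\ref{rule:Ps-P})--(B\ref{rule:theorem}) and Theorem~\ref{thm:decomposition-2}) works by reopening the Jamison--Olariu proof, identifying each place where $P_4$-sparseness is invoked, and replacing it by a branching rule; a modular-decomposition proof of Theorem~\ref{thm:decomposition-1}, however elegant, does not produce that list of invocations.
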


For a graph $H$, let $\Fmina{H}$ be a set containing every inclusion minimal
deletion set of $H$.
The branching vector than corresponds to $\Fmina{H}$ is a vector that contains
the sizes of the sets in $\Fmina{H}$.
We now describe the algorithm of Nastos and Gao~\cite{nastos2012bounded}
for the \textsc{Cograph Deletion} problem.
The algorithm is a branching algorithm (cf.~\cite{cygan2015parameterized}).
The algorithm first repeatedly applies the following branching rule, until the
rule cannot be applied.
\begin{brule}
If $G$ is not $P_4$-sparse, find a set $X$ such that $G[X]$ is isomorphic
to one of the graphs in Figure~\ref{fig:forbidden}.
For every $F \in \Fmina{G[X]}$,
recurse on the instance $(G-F,k-|F|)$.\label{rule:forbidden}
\end{brule}

%By the definition of $\Fmin{H}$, it is clear that Rule~(\currentrule) is safe.

Let $\alpha(G)$ denote the minimum size of a deletion set of $G$.
When $G$ is $P_4$-sparse, the algorithm computes $\alpha(G)$ in polynomial
time and then returns whether $\alpha(G) \leq k$.
The computation of $\alpha(G)$ relies on the recursive decomposition of $G$.
\begin{lemma}\label{lem:alpha}
Let $G$ be a $P_4$-sparse graph.
If $G$ has less than 4 vertices, $\alpha(G) = 0$.
If $G$ (resp., $\overline{G}$) is not connected, let $C_1,\ldots,C_p$ be
the connected components of $G$ (resp., $\overline{G}$).
Then, $\alpha(G) = \sum_{i=1}^p \alpha(G[C_i])$.
Otherwise, $G$ is a spider and let $S,K,R$ be the corresponding partition of
the vertices of $G$.
Then, $\alpha(G) = \alpha(G[S \cup K])+\alpha(G[R])$,
where $\alpha(G[S\cup K])$ is either $|K|-1$ or $\binom{|K|}{2}$.
\end{lemma}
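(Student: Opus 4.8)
The plan is to prove the formula case by case, in each case bounding $\alpha$ from below and above by the same quantity. The case of fewer than $4$ vertices is immediate: such a graph cannot contain an induced $P_4$, so it is already a cograph and $\alpha(G)=0$. For the two disconnected cases I would use a single observation for the lower bound. For any deletion set $F$ of $G$ and any component $C_i$ we have $(G-F)[C_i]=G[C_i]-(F\cap E(G[C_i]))$, since the edges of $G$ with both endpoints in $C_i$ are precisely $E(G[C_i])$; because cographs are hereditary, $F\cap E(G[C_i])$ is a deletion set of $G[C_i]$, and since the edge sets $E(G[C_i])$ are pairwise disjoint we get $|F|\ge\sum_i\alpha(G[C_i])$. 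For the upper bound I would take optimal deletion sets $F_i$ of the $G[C_i]$ and set $F=\bigcup_i F_i$; then $G-F$ is the disjoint union (resp.\ the join, when $\overline G$ is disconnected) of the cographs $G[C_i]-F_i$, and cographs are closed under both operations, so $G-F$ is a cograph.

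For the spider case write $m=|S|=|K|$ and $k_i=\varphi(s_i)$. The lower bound $\alpha(G)\ge\alpha(G[S\cup K])+\alpha(G[R])$ follows exactly as above, by restricting $F$ to the disjoint edge sets $E(G[S\cup K])$ and $E(G[R])$. To pin down $\alpha(G[S\cup K])$ I would split on the leg type. In the thin case $G[S\cup K]$ is the clique $K$ with one pendant attached to each clique vertex, and its induced $P_4$'s are exactly the paths $s_i\,k_i\,k_j\,s_j$; deleting all but one pendant edge leaves a clique with a single pendant together with isolated vertices, which is $P_4$-free, so $\alpha(G[S\cup K])\le m-1$, and a matching lower bound follows from the cheapest way to hit all $\binom{m}{2}$ such paths using pendant and clique edges (an edge-cover-by-vertices-and-edges problem on $K_m$), which costs $m-1$. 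In the thick case the analogous paths $s_i\,k_j\,k_i\,s_j$ over the pairs $\{i,j\}$ are pairwise edge-disjoint, forcing $\alpha(G[S\cup K])\ge\binom{m}{2}$, while deleting the cross edges $s_ik_j$ with $j>i$ turns $G[S\cup K]$ into a threshold graph, hence a cograph, using exactly $\binom{m}{2}$ deletions.

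The remaining point is the upper bound $\alpha(G)\le\alpha(G[S\cup K])+\alpha(G[R])$. I would combine an optimal $F_1$ for $G[S\cup K]$ with an optimal $F_2$ for $G[R]$, choosing $F_1$ to be one of the two deletion sets above; crucially, neither deletes any clique edge, so in $H:=G-(F_1\cup F_2)$ the set $K$ is still a clique, $K$ is still fully joined to $R$, $S$ is still anti-joined to $R$, and $H[S\cup K]$, $H[R]$ are cographs. It then suffices to show $H$ has no induced $P_4$ meeting both $R$ and $S\cup K$, which I would do by a short case analysis on how the four vertices of such a path split among $R$, $K$ and $S$. The only delicate split is a single $R$-vertex adjacent to two $K$-vertices: since $K$ is a clique those two $K$-vertices are adjacent, producing a triangle, and $P_4$ is triangle-free. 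Every other split forces a vertex of degree $3$ or degree $0$, or induces a $K_{2,2}$ or a star, none of which is a $P_4$. Hence $H$ is a cograph and $F_1\cup F_2$ is a deletion set of the desired size.

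The main obstacle is this spider upper bound. The lower bounds and the disconnected cases are routine once the hereditary-restriction trick is available, and identifying the two values of $\alpha(G[S\cup K])$ is a finite computation; but the combination step works only because an optimal solution on $S\cup K$ can be chosen to keep $K$ a clique, and this is exactly the property that eliminates the single dangerous crossing configuration.
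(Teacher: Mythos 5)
Your proof is correct, but there is nothing in the paper to compare it against: the paper states Lemma~\ref{lem:alpha} without any proof, importing it from the decomposition of Jamison and Olariu and its use by Nastos and Gao, so your write-up supplies an argument the paper omits entirely. The three pillars of your argument all check out. The hereditary-restriction lower bound is valid in every case because the edge sets being restricted to ($E(G[C_i])$, resp.\ $E(G[S\cup K])$ and $E(G[R])$) are pairwise disjoint, and an induced $P_4$ whose edges are untouched by $F$ survives in $G-F$, so each induced $P_4$ must be hit. The two spider values are right: in the thin case the induced $P_4$'s of $G[S\cup K]$ are exactly the paths $s_i\varphi(s_i)\varphi(s_j)s_j$, and minimizing $d+\binom{m-d}{2}$ over the number $d$ of deleted pendant edges gives $m-1$; in the thick case the $\binom{m}{2}$ paths $s_ik_jk_is_j$ are pairwise edge-disjoint, and your deletion of the cross edges $s_ik_j$, $j>i$, leaves a graph with nested neighborhoods, hence $P_4$-free. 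Finally, you correctly identify where the real content lies: the combination step. Choosing the optimal set on $S\cup K$ among those that delete no clique edge keeps $K$ a clique and fully joined to $R$, and then any $P_4$ meeting both $R$ and $S\cup K$ must contain at least one $K$-vertex (else it is disconnected across the $R$/$S$ anti-join); two or more $K$-vertices together with an $R$-vertex force a triangle, and a unique $K$-vertex either has degree $3$ in the path or leaves an $S$- or $R$-vertex isolated. Your emphasis that this choice is not cosmetic is also justified: if $F_1$ deleted a clique edge $k_1k_2$, then with $r\in R$ and $s$ a surviving neighbor of $k_1$ in $S$, the vertices $k_2,r,k_1,s$ could induce a $P_4$, so the combination would genuinely fail for an arbitrary optimal $F_1$.
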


The worst case of Rule~(\currentrule) is when $X$ induces a pan.
In this case $\Fmina{G[X]} = \{\{e_1\}, \{e_2,e_4\}, \{e_2,e_5\}, \{e_3,e_4\},
\{e_3,e_5\} \}$,
where $e_1,\ldots,e_5$ are the edges of $G[X]$ according to
Figure~\ref{fig:pan}.
Thus, the branching vector in this case is $(1,2,2,2,2)$ and the branching
number is at most 2.562.
%For the other cases of the graph $G[X]$, the branching vectors are either
%$(1,1,2)$, $(2,2,2,2,3,3,3,3)$, or $(1,2,2,3,3)$ and the branching number
%of these vectors are at most 2.415, 2.383, and 2.27, respectively.
Therefore, the running time of the algorithm is $O^*(2.562^k)$.

The main idea behind our improved algorithm is as follows.
The proof of Theorem~\ref{thm:decomposition-1} is based on showing that
certain graphs with 6 to 8 vertices cannot occur in $G$ since
these graphs have induced subgraphs that are isomorphic to graphs in
Figure~\ref{fig:forbidden}.
Instead of destroying subgraphs of $G$ that are isomorphic to the graphs
of Figure~\ref{fig:forbidden},
our algorithm destroys the graphs considered in the proof of
Theorem~\ref{thm:decomposition-1}.
We then show that a graph that does not contain these graphs as induced
subgraphs admits a recursive
decomposition similar to the decomposition of $P_4$-sparse graphs,
and this decomposition can be used to solve the problem in polynomial time.

Before describing our algorithm, we need some definitions.
Suppose that $A$ is a set of vertices that induces a $P_4$.
Denote by $I(A)$ (resp., $T(A)$), the set of vertices
$v\in V(G) \setminus A$ such that $|N(v)\cap A| = 0$
(resp., $|N(v)\cap A| = 4$).
Let $P(A) = V(G)\setminus (A \cup I(A) \cup T(A))$.
In other words, $P(A)$ is the set of vertices $v \in V(G)\setminus A$
such that $1 \leq |N(v)\cap A| \leq 3$.
Let $\Pmid{A}$ be the set of vertices $v \in V(G)\setminus A$ such that
$|N(v)\cap A| = 2$ and $v$ is adjacent to the two internal vertices of the path
induced by $A$.
Let $\Pother{A} = P(A) \setminus \Pmid{A}$.
For a graph $H$, let $\Fmin{H} = \Fmina{H'}$, where $H'$ is an induced subgraph
of $H$ that contains at least one induced $P_4$
and such that the branching number corresponding to $\Fmina{H'}$ is minimized.

We now describe the branching rules of our algorithm.
In these rules, the algorithm finds a set $X$ of 6 to 8 vertices
that induces a certain graph and then it performs branching according to
$\Fmin{G[X]}$.
We note that the algorithm branches according to $\Fmin{G[X]}$ rather than
according to $\Fmina{G[X]}$ since in some cases the former branching is more
efficient
(for example, if $X$ induces a $P_7$ then the branching number corresponding to
$\Fmin{G[X]}$ is 2.303 while the branching number corresponding to
$\Fmina{G[X]}$ is 2.45).

The first three branching rules add restrictions on vertices in $\Pother{A}$.

\begin{brule}
Let $A$ be a set that induces a $P_4$ such that there are
distinct vertices $p \in \Pother{A}$ and $p' \in P(A)$ for which
$G[A\cup \{p,p'\}]$ is not isomorphic to either of the graphs in
in Figure~\ref{fig:Ps-P}.
For every $F \in \Fmin{G[A\cup \{p,p'\}]}$,
recurse on the instance $(G-F,k-|F|)$.\label{rule:Ps-P}
\end{brule}

To bound the branching number of Rule~(\currentrule), we used a Python script.
The script enumerates all possible cases for the graph $G[A\cup \{p,p'\}]$.
Denote by $a,b,c,d$ the path induced by $A$.
For the vertex $p$, the script enumerates all possible cases for $N(p) \cap A$.
After removing symmetric cases, there are 7 possible cases for $N(p) \cap A$:
$\{a\}$, $\{b\}$, $\{a,b\}$, $\{a,c\}$, $\{a,d\}$, $\{a,b,c\}$, and $\{a,c,d\}$.
For each case of $N(p) \cap A$,
the script enumerates all possible cases for $N(p') \cap A$
(here symmetric cases are not removed) and it also considers the cases
$(p,p') \notin E(G)$ and $(p,p') \in E(G)$.
For each possible case the scripts computes $\Fmin{G[A\cup \{p,p'\}]}$.
by enumerating the induced subgraphs of $G[A\cup \{p,p'\}]$.
For each induced subgraph $H$,
the script enumerates all subsets of $E(H)$ and checks
which of these sets are deletion sets.
%and then computes the corresponding branching number.
%The worst case of Rule~(\currentrule) is when $p' \in \Pmid{A}$,
%$(p,p') \notin E(G)$, and either $N(p) \cap A = \{b\}$ or
%$N(p) \cap A = \{a,b\}$.
%In both cases,
%$\Fmin{G[A\cup \{p,p'\}]} = \{
%\{(c,d)\}, \{(b,c),(p',b)\}, \{(b,c),(p',c)\}, \{(a,b),(p,b)\} \}$.
The worst branching vector for Rule~(\currentrule) is $(1,2,2,2)$
and the branching number is at most $2.303$.
The script was also used to compute the branching numbers of the
other rules of the algorithm.

\begin{figure}
\centering
\subfigure[]{\includegraphics[scale=0.9]{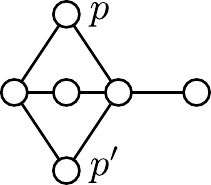}}
\qquad
\subfigure[\label{fig:E}]{\includegraphics[scale=0.9]{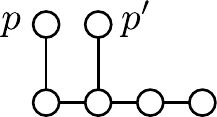}}
\caption{Unallowed induced subgraphs for
Rule~(B\ref{rule:Ps-P}).\label{fig:Ps-P}}
\end{figure}

\begin{brule}
Let $A$ be a set that induces a $P_4$ such that
there are non-adjacent vertices $p \in \Pother{A}$ and $t \in T(A)$.
For every $F \in \Fmin{G[A\cup \{p,t\}]}$,
recurse on the instance $(G-F,k-|F|)$.\label{rule:Ps-T}
\end{brule}

%The branching number of Rule~(\currentrule) is 2.207.

\begin{brule}
Let $A$ be a set that induces a $P_4$ such that
there are adjacent vertices $p \in \Pother{A}$ and $i \in I(A)$
for which $G[A\cup \{p,i\}]$ is not isomorphic to the graph in
Figure~\ref{fig:Ps-I}.
For every $F \in \Fmin{G[A\cup \{p,i\}]}$,
recurse on the instance $(G-F,k-|F|)$.\label{rule:Ps-I}
\end{brule}

\begin{figure}
\centering
\includegraphics[scale=0.9]{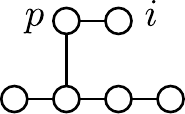}
\caption{Unallowed induced subgraph for
Rule~(B\ref{rule:Ps-I}).\label{fig:Ps-I}}
\end{figure}

%The branching number of Rule~(\currentrule) is 2.259.

The last rule handles cases in which the proof of
Theorem~\ref{thm:decomposition-1} relies on the fact that $G$ is $P_4$-sparse.
For example, Observation~2.3 in~\cite{jamison1992tree}
states that there is no vertex $t \in T(A)$ and vertices
$u,v \in I(A) \cup P(A)$ such that $E(G[\{u,v,t\}]) = \{(u,v),(u,t)\}$.
This observation is proved by showing that if such vertices exist, $G$ has
an induced subgraph isomorphic to a fork (see Figure~\ref{fig:fork}).
Therefore, we add a rule that is applicable if there are such
vertices $t,u,v$. The rule performs branching according to
$\Fmin{G[A\cup \{t,u,v\}]}$.
Since the proof of Theorem~\ref{thm:decomposition-1} can consider the complement
graph of $G$, we also need to add a complement rule that is applicable
if the former rule is applicable in $\overline{G}$.
In other words, the new rule is applicable if there are vertices $i \in I(A)$
and $u,v \in T(A) \cup P(A)$ such that $E(G[\{u,v,i\}]) = \{(v,i)\}$.

\begin{brule}
Let $A$ be a set that induces a $P_4$ such that one of the following
cases occurs.
\begin{enumerate}[itemsep=0pt]
\item\label{enu:2.3}
There are vertices $u,v \in I(A) \cup P(A)$ and $x \in T(A)$ such that
$E(G[\{u,v,x\}]) = \{(u,v),(u,x)\}$ (Observation~2.3 in~\cite{jamison1992tree}).
\item\label{enu:2.3b}
There are vertices $u,v \in T(A) \cup P(A)$ and  $x \in I(A)$ such that
$E(G[\{u,v,x\}]) = \{(v,x)\}$ (Observation~2.3).\label{rule:theorem-case-2}
\item\label{enu:2.4}
There are vertices $v \in I(A) \cup P(A)$ and $x,y \in T(A)$ such that
$E(G[\{v,x,y\}]) = \{(v,x)\}$ (Observation~2.4).
\item\label{enu:2.4b}
There are vertices $v \in T(A)\cup P(A)$ and $x,y \in I(A)$ such that
$E(G[\{v,x,y\}]) = \{(v,y),(x,y)\}$ (Observation~2.4).
\item\label{enu:2.6}
There are vertices $v \in P(A)$ and $x,y \in T(A)$ such that
$v$ is not adjacent to $x,y$ (Observation~2.6).
\item\label{enu:2.6b}
There are vertices $v \in P(A)$ and $x,y \in I(A)$ such that
$v$ is adjacent to $x,y$ (Observation~2.6).
\item\label{enu:2.7}
There are vertices $u,v \in P(A)$ and $x \in T(A)$ such that
$x$ is not adjacent to $u,v$ (Observation~2.7).
\item\label{enu:2.7b}
There are vertices $u,v \in P(A)$ and $x \in I(A)$ such that
$x$ is adjacent to $u,v$ (Observation~2.7).
\item\label{enu:2.14}
There are vertices $v \in P(A)$, $x \in T(A)$, and $y \in I(A)$ such that
$E(G[\{v,x,y\}]) = \{(x,y)\}$ (Observation~2.14).
\item\label{enu:2.14b}
There are vertices $v \in P(A)$, $x \in I(A)$, and $y \in T(A)$ such that
$E(G[\{v,x,y\}]) = \{(v,x),(v,y)\}$ (Observation~2.14).
\item\label{enu:2.15}
There are vertices $v \in P(A)$, $x \in T(A)$, and $y \in I(A)$ such that
$E(G[\{v,x,y\}]) = \{(v,y)\}$ (Fact~2.15).
\item\label{enu:2.15b}
There are vertices $v \in P(A)$, $x \in I(A)$, and $y \in T(A)$ such that
$E(G[\{v,x,y\}]) = \{(v,x),(x,y)\}$ (Fact~2.15).
\item\label{enu:2.15c}
There are vertices $v \in P(A)$, $x,y \in T(A)$, and $z \in I(A)$, such that
$E(G[\{v,x,y,z\}]) = \{(v,x),(x,y),(y,z)\}$ (Fact~2.15).
\item\label{enu:2.15d}
There are vertices $v \in P(A)$, $x,y \in I(A)$, and $z \in T(A)$ such that
$E(G[\{v,x,y,z\}]) = \{(v,y),(v,z),(x,z)\}$ (Fact~2.15).
\end{enumerate}
For every $F \in \Fmin{G[A \cup B]}$,
where $B$ is a set containing the vertices mentioned in the cases above,
(namely, $B = \{u,v,x\}$ in Cases~\ref{enu:2.3}, \ref{enu:2.3b}, \ref{enu:2.7},
and~\ref{enu:2.7b},
$B = \{v,x,y\}$ in Cases~\ref{enu:2.4}, \ref{enu:2.4b}, \ref{enu:2.6},
\ref{enu:2.6b}, \ref{enu:2.14}, \ref{enu:2.14b}, \ref{enu:2.15},
and~\ref{enu:2.15b},
and $B = \{v,x,y,z\}$ in Cases~\ref{enu:2.15c} and~\ref{enu:2.15d}).
recurse on the instance $(G-F,k-|F|)$.\label{rule:theorem}
\end{brule}

The branching numbers of Rule~(B\ref{rule:Ps-T}), Rule~(B\ref{rule:Ps-I})
and Rule~(\currentrule) are at most 2.27, 2.303, and 2.21, respectively.

%We note that the cases listed in Rule~(\currentrule) cover all the places
%in which the proof of Theorem~\ref{thm:decomposition-1}
%relies on the fact that $G$ is $P_4$-sparse.

We now show that a graph in which the branching rules cannot be applied
has a recursive decomposition.
\begin{theorem}\label{thm:decomposition-2}
Let $G$ be a graph with at least 7 vertices in which
Rules~(B1)--(B\ref{rule:theorem}) cannot be applied.
Then one of the following cases occurs.
\begin{enumerate}
\item
$G$ is not connected.
\item
$\overline{G}$ is not connected.
\item
$G$ is a spider.
\item\label{case:bipartite}
$G$ is a bipartite graph with parts $X$ and $Y$ such that (a) $|X|=2$
(b) There is a vertex $y \in Y$ such that $y$ is adjacent to exactly one vertex
in $X$ and every vertex in $Y\setminus \{y\}$ is adjacent to the two vertices
of $X$.
\end{enumerate}
\end{theorem}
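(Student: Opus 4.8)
The plan is to follow the proof of Theorem~\ref{thm:decomposition-1} due to Jamison and Olariu~\cite{jamison1992tree}, exploiting the fact that Rules~(B\ref{rule:Ps-P})--(B\ref{rule:theorem}) were designed so that their inapplicability supplies exactly the structural information that the original proof extracts from $P_4$-sparseness. I would first dispose of the easy cases: if $G$ or $\overline{G}$ is disconnected we are in Case~1 or Case~2, so assume that both $G$ and $\overline{G}$ are connected. If $G$ contained no induced $P_4$ it would be a cograph, hence $P_4$-sparse, and Theorem~\ref{thm:decomposition-1} would make it disconnected, co-disconnected, or a spider; since a spider contains an induced $P_4$ whereas a cograph does not, $G$ or $\overline{G}$ would be disconnected, a contradiction. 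Hence $G$ has an induced $P_4$; fix one on a set $A=\{a,b,c,d\}$ with consecutive vertices $a,b,c,d$, and partition $V(G)\setminus A$ into $I(A)$, $T(A)$, $\Pmid{A}$, and $\Pother{A}$.

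Next I would replay the Jamison--Olariu analysis of the adjacencies among $A$, $I(A)$, $T(A)$, and $P(A)$. Every configuration that their proof forbids by invoking $P_4$-sparseness --- their Observations~2.3, 2.4, 2.6, 2.7, 2.14 and Fact~2.15, together with the complementary versions --- is one of the fourteen configurations listed in Rule~(B\ref{rule:theorem}); since this rule is inapplicable, none of these configurations occurs, so the corresponding deductions carry over without ever using $P_4$-sparseness. The genuinely new ingredient is $\Pother{A}$, which is tightly constrained in the $P_4$-sparse setting but must here be controlled directly: Rule~(B\ref{rule:Ps-T}) forces every vertex of $\Pother{A}$ to be adjacent to every vertex of $T(A)$, Rule~(B\ref{rule:Ps-I}) pins down the adjacency between $\Pother{A}$ and $I(A)$ up to the star of Figure~\ref{fig:Ps-I}, and Rule~(B\ref{rule:Ps-P}) determines $G[A\cup\{p,p'\}]$ for every $p\in\Pother{A}$ and $p'\in P(A)$ up to the two exceptions of Figure~\ref{fig:Ps-P}.

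With these facts the argument splits according to whether an exception graph of Figure~\ref{fig:Ps-P} or Figure~\ref{fig:Ps-I} occurs as an induced subgraph. If none does, the adjacency constraints coincide with those available to Jamison and Olariu, and their construction of the partition $S,K,R$ goes through verbatim, placing $G$ in Case~3. If some exception is present, then --- the exceptions of Figure~\ref{fig:Ps-P} being precisely the local patterns of the bipartite graph of Case~\ref{case:bipartite}, such as $K_{2,4}$ minus an edge --- I would show that its presence propagates: using connectivity together with the many induced $P_4$'s lying inside the exception and the inapplicability of the rules with respect to each of them, every remaining vertex is forced into a bipartition $X\cup Y$ with $|X|=2$ and the adjacency pattern of Case~\ref{case:bipartite}, and no extra vertex or chord that would destroy this pattern can survive.

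The main obstacle is exactly this last step: certifying that a single allowed exception rigidifies the whole graph into Case~\ref{case:bipartite}, with no hybrid between a spider and the bipartite graph left standing. This is a finite but delicate analysis --- enumerating the induced $P_4$'s inside the exception graphs and verifying, for each, which further vertices and edges keep all of Rules~(B\ref{rule:Ps-P})--(B\ref{rule:theorem}) inapplicable --- and it is here that the hypothesis of at least seven vertices enters, to exclude the sporadic small graphs on which the decomposition would otherwise break down.
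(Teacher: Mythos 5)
Your high-level strategy is the same as the paper's (replay the Jamison--Olariu proof, with the inapplicability of Rules~(B2)--(B5) substituting for $P_4$-sparseness), but the decisive step is missing: you explicitly label the case where an exception graph is present as ``the main obstacle'' and only describe an intention to carry out a ``finite but delicate analysis'' showing that the exception ``propagates'' to Case~\ref{case:bipartite}. That analysis is never performed, and it is precisely where the content of the theorem lies. Moreover, your propagation plan is not how the configuration resolves, and your case split is on the wrong axis. The paper splits on whether $\Pother{A}=\emptyset$ (in which case Jamison--Olariu gives a spider) or $\Pother{A}\neq\emptyset$; in the latter case it does \emph{not} need any propagation argument over induced $P_4$'s inside the exception. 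Instead it uses three short facts: (i) Rule~(B\ref{rule:Ps-P}) forces $\Pmid{A}=\emptyset$; (ii) the two injective-mapping properties from the Jamison--Olariu proof ($\varphi\colon T(A)\to P(A)$ with $N(t)\cap P(A)=P(A)\setminus\{\varphi(t)\}$, and $\varphi\colon I(A)\to P(A)$ with $N(i)\cap P(A)=\{\varphi(i)\}$) still hold, since their proofs do not use $\Pother{A}=\emptyset$; (iii) combining (ii) with Rule~(B\ref{rule:Ps-T}) kills $T(A)$ outright. Then, if $I(A)=\emptyset$, either $|P(A)|\le 2$ (so $G$ has at most $6$ vertices, contradicting the hypothesis) or Rule~(B\ref{rule:Ps-P}) forces every $p\in P(A)$ to satisfy $N(p)=\{a,c\}$ with $P(A)$ independent, which is exactly Case~\ref{case:bipartite}; and if $I(A)\neq\emptyset$, Rules~(B\ref{rule:Ps-I}), (B\ref{rule:Ps-P}) and case~\ref{rule:theorem-case-2} of Rule~(B\ref{rule:theorem}) force $|I(A)|=|P(A)|=1$, so $G$ has $6$ vertices, again a contradiction.

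Two further inaccuracies in your sketch. First, your claim that when no exception occurs ``the construction of $S,K,R$ goes through verbatim'' is false: if $P(A)=\Pother{A}$ is a single vertex, no exception of Figure~\ref{fig:Ps-P} occurs (Rule~(B\ref{rule:Ps-P}) needs two distinct vertices), yet the spider construction cannot place that vertex; what actually happens there is that the inherited mapping properties bound $|T(A)|$ and $|I(A)|$ so severely that $G$ has fewer than $7$ vertices. Second, your parenthetical assertion that the exceptions of Figure~\ref{fig:Ps-P} are ``precisely the local patterns'' of the Case~\ref{case:bipartite} graph is wrong for the second exception (Figure~\ref{fig:E}); that pattern does not extend to Case~\ref{case:bipartite} but is instead eliminated through the $I(A)\neq\emptyset$ analysis. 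You also drop the paper's choice of $A$ as a $P_4$ maximizing $|P(A)|$, which is what licenses importing the Jamison--Olariu properties in the first place.
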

\begin{proof}
Suppose that $G$ is a graph such that $G$ and $\overline{G}$ are connected.
We choose a set $A$ that induces a $P_4$ such that $|P(A)|$ is maximized.
Denote the vertices of the path by $a,b,c,d$.

Recall that Theorem~\ref{thm:decomposition-1} requires the graph $G$ to be
$P_4$-sparse. The proof of Theorem~\ref{thm:decomposition-1} relies on this
requirement in the following places:
(1) in the proofs of Observations 2.3, 2.4, 2.6, 2.7, 2.14, and~2.15.
(2) the requirement that $G$ is $P_4$ sparse implies that
$\Pother{A} = \emptyset$ and the emptiness of $\Pother{A}$ is used in the proof.
Since Rule~(B\ref{rule:theorem}) cannot be applied, the observations mentioned
above remain true.
If $\Pother{A} = \emptyset$ then the arguments used in the proof of
Theorem~\ref{thm:decomposition-1} are also true here, and therefore
$G$ is a spider.

Now suppose that $\Pother{A} \neq \emptyset$.
Since Rule~(B\ref{rule:Ps-P}) cannot be applied, $\Pmid{A} = \emptyset$.
The proof of Theorem~\ref{thm:decomposition-1} shows the following
properties
(More precisely, the proof of Theorem~\ref{thm:decomposition-1} shows that the
first property below is satisfied in either $G$ or $\overline{G}$.
The second property below is satisfied in $G$ if and only if the first property
is satisfied in $\overline{G}$).
\begin{enumerate}
%\item\label{prop:T-I}
%Either $I(A) = \emptyset$ or $T(A) = \emptyset$.
\item\label{prop:T-P}
If $T(A) \neq \emptyset$ then there is an injective mapping
$\varphi \colon T(A) \to P(A)$ such that
$N(t) \cap P(A) = P(A) \setminus \{\varphi(t)\}$ for every $t \in T(A)$.
\item\label{prop:I-P}
If $I(A) \neq \emptyset$ then there is an injective mapping
$\varphi \colon I(A) \to P(A)$ such that
$N(i) \cap P(A) = \{\varphi(i)\}$ for every $i \in I(A)$.
\end{enumerate}
These properties are also true here, since the proof of these properties
does not rely on the emptiness of $\Pother{A}$.

We now claim that $T(A) = \emptyset$.
Suppose conversely that there is a vertex $t \in T(A)$.
By Property~\ref{prop:T-P}, there is a vertex $p \in P(A) = \Pother{A}$
such that $t$ is not adjacent to $p$.
This is a contradiction to the assumption that Rule~(B\ref{rule:Ps-T})
cannot be applied.
Therefore, $T(A) = \emptyset$.

Suppose that $I(A) = \emptyset$.
If $|P(A)| \leq 2$ then $G$ has at most 6 vertices and we are done.
Otherwise, since Rule~(B\ref{rule:Ps-P}) cannot be applied,
without loss of generality $N(p) = \{a,c\}$ for every $p \in P(A)$
and $P(A)$ is an independent set.
Therefore, $G$ satisfies Case~\ref{case:bipartite} of the theorem
with $X = \{a,c\}$ and $Y = V(G) \setminus X$.

Now suppose that $I(A) \neq \emptyset$.
Let $i \in I(A)$.
By Property~\ref{prop:I-P}, there is a vertex $p \in P(A)$ such that
$N(i) \cap P(A) = \{p\}$.
Since Rule~(B\ref{rule:Ps-I}) cannot be applied,
without loss of generality, $N(p) \cap A = \{b\}$.
We claim that $|I(A)| = 1$.
Suppose conversely that $|I(A)|>1$ and let $i' \in I(A) \setminus \{i\}$.
By Property~\ref{prop:I-P}, $i'$ is adjacent to a vertex
$p' \in P(A) \setminus \{p\}$.
Since Rule~(B\ref{rule:Ps-I}) cannot be applied, either $N(p') \cap A = \{b\}$
or $N(p') \cap A = \{c\}$.
In both cases we obtain a contradiction to the assumption that
Rule~(B\ref{rule:Ps-P}) cannot be applied.
Therefore, $|I(A)| = 1$.
Since Rule~(B\ref{rule:Ps-P}) cannot be applied, either
$P(A) = \{p\}$, or $P(A) = \{p,p'\}$ and $N(p') \cap A = \{a\}$.
In the latter case we obtain that case~\ref{rule:theorem-case-2} of
Rule~(B\ref{rule:theorem}) can be applied on $i,p,p'$
(recall that $N(i) \cap P(A) = \{p\}$), a contradiction.
Therefore, $P(A) = \{p\}$. By Property~\ref{prop:I-P}, $I(A) = \{i\}$,
and thus $G$ contains 6 vertices.
\end{proof}

Given a graph $G$ in which Rules~(B1)--(B\ref{rule:theorem})
cannot be applied, $\alpha(G)$ can be computed in polynomial times as follows.
If $G$ has at most 6 vertices, compute $\alpha(G)$ by enumerating all subsets
of $E(G)$.
If $G$ satisfies Case~\ref{case:bipartite} of Theorem~\ref{thm:decomposition-2}
then $\alpha(G) = 1$.
Otherwise, we have that either $G$ is not connected, $\overline{G}$ is not
connected, or $G$ is a spider.
In each of these cases $\alpha(G)$ can be computed recursively according to
Lemma~\ref{lem:alpha}.

\begin{theorem}
There is an $O^*(2.303^k)$-time algorithm for \textsc{Cograph Deletion}.
\end{theorem}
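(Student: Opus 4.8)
The plan is to assemble the final running-time bound from the ingredients already developed in the excerpt, treating the algorithm as a standard branching algorithm whose correctness and complexity are established by (i) the correctness of the branching rules, (ii) the branching numbers computed by the Python script, and (iii) the polynomial-time solvability of the base case guaranteed by Theorem~\ref{thm:decomposition-2}. The skeleton of the argument is the classical bounded-search-tree analysis: the recursion depth is controlled by $k$, each node does polynomial work, and the number of leaves is bounded by the product of per-node branching factors, which is dominated by the largest branching number among all rules.

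\begin{proof}
The algorithm works as follows. Given an instance $(G,k)$, if $k<0$ it rejects. It then checks whether any of Rules~(B1)--(B\ref{rule:theorem}) is applicable; if so, it applies an applicable rule, branching on the instances $(G-F,k-|F|)$ for every $F\in\Fmin{G[X]}$ (or $\Fmina{G[X]}$ for Rule~(B1)), where $X$ is the set of vertices identified by the rule. If none of the rules is applicable, the algorithm computes $\alpha(G)$ in polynomial time as described above and returns whether $\alpha(G)\le k$.

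We first argue correctness. Each branching rule is \emph{safe}: in every case the vertex set $X$ induces a graph containing an induced $P_4$, and $\Fmin{G[X]}$ is defined via $\Fmina{H'}$, which contains every inclusion-minimal deletion set of a chosen induced subgraph $H'$ of $G[X]$. Since $G[X]$ is not a cograph, any deletion set $F^*$ of $G$ must also be a deletion set of $G[H']$, hence must contain some $F\in\Fmina{H'}=\Fmin{G[X]}$; therefore $F^*\setminus F$ is a deletion set of $G-F$ of size $|F^*|-|F|$, so branching on all $F\in\Fmin{G[X]}$ preserves the existence of a deletion set of size at most $k$. The same reasoning applies to Rule~(B1) with $\Fmina{G[X]}$. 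At a leaf of the recursion, no rule applies, so by Theorem~\ref{thm:decomposition-2} the graph $G$ (assuming it has at least $7$ vertices; otherwise it is handled by brute force) is disconnected, has a disconnected complement, is a spider, or satisfies Case~\ref{case:bipartite}; in each case $\alpha(G)$ is computed correctly in polynomial time by Lemma~\ref{lem:alpha} and the accompanying discussion. Thus the algorithm returns \textsc{yes} if and only if $G$ has a deletion set of size at most $k$.

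For the running time, observe that each application of a branching rule decreases the parameter by at least $1$ (every $F$ is nonempty), so the recursion tree has depth at most $k$. The work at each node is polynomial: identifying an applicable rule amounts to searching for a constant-size vertex subset $X$ inducing one of finitely many configurations, and the base-case computation of $\alpha(G)$ is polynomial. The number of leaves of the recursion tree is bounded by $c^k$, where $c$ is the largest branching number over all rules. By the bounds stated in the excerpt, the branching numbers of Rules~(B1)--(B\ref{rule:theorem}) are at most $2.562$, $2.303$, $2.27$, $2.303$, and $2.21$, respectively. \emph{However}, Rule~(B1) with branching number $2.562$ is only the worst case when $X$ induces a pan; the key point of the improved algorithm is that the configurations handled by Rules~(B2)--(B\ref{rule:theorem}) are precisely the problematic $6$-to-$8$-vertex graphs arising in the proof of Theorem~\ref{thm:decomposition-1}, so that whenever $G$ is not $P_4$-sparse a rule of branching number at most $2.303$ applies before Rule~(B1) forces a worse branch; consequently the dominant branching number is $2.303$. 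Hence the total number of leaves is $O(2.303^k)$, and the overall running time is $O^*(2.303^k)$.
\end{proof}

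The step I expect to be the main obstacle is justifying that the effective worst-case branching number is $2.303$ rather than the $2.562$ of the raw $P_4$-sparse forbidden-subgraph rule: one must verify that the new rules are exhaustively triggered in exactly those situations that would otherwise invoke a costly Rule~(B1) branch, so that the pan branching never dominates. This hinges on the completeness of the case analysis in Theorem~\ref{thm:decomposition-2} and on the Python-verified branching numbers, both of which are assumed here.
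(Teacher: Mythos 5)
There is a genuine gap, and it stems from a misreading of which rules the final algorithm actually uses. In the paper's numbering the Nastos--Gao forbidden-subgraph rule is Rule~(B\ref{rule:forbidden}), and the improved algorithm \emph{discards it entirely}: it branches only with the four new rules (B\ref{rule:Ps-P})--(B\ref{rule:theorem}), each of which has branching number at most $2.303$, and when none of them applies it solves the instance in polynomial time via Theorem~\ref{thm:decomposition-2} and Lemma~\ref{lem:alpha} (with $\alpha(G)=1$ in Case~\ref{case:bipartite}). Your algorithm instead keeps the pan rule with branching number $2.562$ among its branching rules --- your list of five branching numbers $2.562, 2.303, 2.27, 2.303, 2.21$ makes this explicit --- and you must then argue that ``whenever $G$ is not $P_4$-sparse a rule of branching number at most $2.303$ applies before'' the pan rule fires. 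That claim is false. A graph satisfying Case~\ref{case:bipartite} of Theorem~\ref{thm:decomposition-2} is not $P_4$-sparse: writing $X=\{x_1,x_2\}$ with $y$ adjacent only to $x_1$ and taking $y_1,y_2\in Y\setminus\{y\}$, the five vertices $y,x_1,y_1,x_2,y_2$ induce a $C_4$ with a pendant vertex, i.e.\ a pan. Yet none of the new rules applies to such a graph: $T(A)=I(A)=\emptyset$ for the path $A$ on $y,x_1,y_1,x_2$, and for any $p,p'\in\Pother{A}$ the graph $G[A\cup\{p,p'\}]$ is exactly $K_{2,4}$ minus an edge, which is one of the excepted graphs of Figure~\ref{fig:Ps-P} (this exception exists precisely to let such graphs reach the base case). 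So your algorithm would fire the pan rule on these instances with branching vector $(1,2,2,2,2)$, and your analysis only yields $O^*(2.562^k)$, not $O^*(2.303^k)$.

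The repair is exactly the point of the paper: the leaves of the recursion are \emph{not} required to be $P_4$-sparse, so there is no need for a rule that destroys the forbidden subgraphs of Figure~\ref{fig:forbidden}. Theorem~\ref{thm:decomposition-2} guarantees that once Rules~(B\ref{rule:Ps-P})--(B\ref{rule:theorem}) are inapplicable, the graph is disconnected, co-disconnected, a spider, or a Case~\ref{case:bipartite} graph, and each of these is polynomial-time solvable (Lemma~\ref{lem:alpha}, brute force on at most $6$ vertices, or $\alpha(G)=1$). With Rule~(B\ref{rule:forbidden}) removed, every rule in the algorithm has branching number at most $2.303$, each application deletes at least one edge and decreases $k$, and the $O^*(2.303^k)$ bound follows directly with no pre-emption argument. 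The remainder of your proof --- safeness of the rules via the definition of $\Fmin{\cdot}$, polynomial work per node, bounded recursion depth --- is fine and matches the paper's reasoning.
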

\begin{proof}
Every application of a branching rule decreases the number of edges of the
graph by at least 1. Therefore, the depth of the recursion tree is at most
$|E|$.
By Theorem~\ref{thm:decomposition-2}, an instance in which the
branching rules cannot be applied can be solved in polynomial time.
Since each rule can be applied in polynomial time and since the branching rules
have branching numbers at most 2.303, it follows that the time complexity of the algorithm is $O^*(2.303^k)$.

The correctness of the algorithm follows from the safeness of the rules
(which follows from the definition of $\Fmin{H}$)
and from the fact that the algorithm terminates on every input.
\end{proof}

\section{Algorithm for Cograph Editing}\label{sec:editing}

The algorithm for \textsc{Cograph Editing} is similar to the
algorithm for \textsc{Cograph Deletion}. The differences are as follows.

\begin{enumerate}
\item
In each branching rule, the algorithm recurses on
$(G\triangle F, k-|F|)$ for every $F \in \Fmine{H}$ for the corresponding
graph $H$,
where $\Fmine{H}$ is a set containing every inclusion minimal editing
set of $H$.
\item
Rule~(B\ref{rule:Ps-P}) is applied also when $G[A\cup \{p,p'\}]$ is isomorphic
to a graph in Figure~\ref{fig:Ps-P}.
Similarly, Rule~(B\ref{rule:Ps-I}) is applied also when $G[A\cup \{p,i\}]$ is
isomorphic to the graph in Figure~\ref{fig:Ps-I}.
\item
If $G$ is a graph in which Rules (B\ref{rule:Ps-P})--(\currentrule) cannot be
applied, either $G$ contains at most 5 vertices,
%the graph has a decomposition as $P_4$-sparse graphs.
%In other words, if Rules (B\ref{rule:Ps-P})--(\currentrule) cannot be applied,
$G$ is not connected, $\overline{G}$ is not connected, or $G$ is a spider.
Therefore, the minimum size of an editing set of $G$ can be computed in
polynomial time using the algorithm for $P_4$-sparse graphs of
Liu et al.~\cite{liu2012complexity}.
\end{enumerate}

Using an automated analysis as in Section~\ref{sec:deletion},
we obtain that the branching numbers of Rules
(B\ref{rule:Ps-P})--(B\ref{rule:Ps-I}) are at most 4.313,
and the branching number of Rule~(B\ref{rule:theorem}) is at most 4.329.
We obtain the following theorem.

\begin{theorem}
There is an $O^*(4.329^k)$-time algorithm for \textsc{Cograph Editing}.
\end{theorem}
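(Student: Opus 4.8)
The plan is to mirror the proof of the $O^*(2.303^k)$ algorithm for \textsc{Cograph Deletion}, substituting editing sets for deletion sets throughout. First I would settle termination and the depth bound. Each branching rule recurses on $(G\triangle F, k-|F|)$ for $F \in \Fmine{G[B]}$, where $G[B]$ contains an induced $P_4$ and is therefore not a cograph; consequently no editing set of $G[B]$ is empty, so every minimal editing set satisfies $|F| \ge 1$. Thus each recursive call decreases $k$ by at least one, the recursion depth is at most $k$, and the algorithm terminates. Combined with the per-rule branching numbers, this yields the running time.

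The heart of the argument is the editing analog of Theorem~\ref{thm:decomposition-2}: if $G$ has more than five vertices and none of Rules~(B\ref{rule:Ps-P})--(B\ref{rule:theorem}) can be applied, then $G$ is disconnected, $\overline{G}$ is disconnected, or $G$ is a spider. I would obtain this by rerunning the proof of Theorem~\ref{thm:decomposition-2} with one change: the editing versions of Rule~(B\ref{rule:Ps-P}) and Rule~(B\ref{rule:Ps-I}) also fire on the exception graphs of Figures~\ref{fig:Ps-P} and~\ref{fig:Ps-I}. Those exception graphs are exactly the configurations that produced the bipartite Case~\ref{case:bipartite} of Theorem~\ref{thm:decomposition-2}; since they are no longer permitted, the sub-case $I(A)=\emptyset$ with $|P(A)|\ge 3$ cannot survive, and the bipartite case disappears. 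The remaining branches of the proof (showing $T(A)=\emptyset$, then $|I(A)|\le 1$, then $|P(A)|\le 1$) go through unchanged, because Properties~\ref{prop:T-P} and~\ref{prop:I-P} and the underlying observations do not depend on the deletion-versus-editing distinction. This leaves only spiders and (co-)disconnected graphs, together with graphs on at most a constant number of vertices.

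With the decomposition in hand I would dispatch the base cases. Graphs with at most five vertices are handled by brute force over all pairs of vertices. For spiders and (co-)disconnected graphs the minimum editing-set size is computed in polynomial time by the $P_4$-sparse editing algorithm of Liu et al.~\cite{liu2012complexity}, since these are precisely the $P_4$-sparse decomposition structures of Theorem~\ref{thm:decomposition-1}. Each branching rule is applied in polynomial time, and an automated analysis identical to that of Section~\ref{sec:deletion} certifies branching numbers at most $4.313$ for Rules~(B\ref{rule:Ps-P})--(B\ref{rule:Ps-I}) and at most $4.329$ for Rule~(B\ref{rule:theorem}). The worst of these, $4.329$, dominates and gives the $O^*(4.329^k)$ bound. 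Correctness follows as in the deletion case: safeness of every rule is immediate from the definition of $\Fmine{H}$ (some minimal editing set of the local obstruction is a subset of any optimal editing set), and termination was established above.

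The step I expect to be the main obstacle is re-verifying the decomposition, and in particular confirming that switching on the exception graphs for the editing rules exactly eliminates Case~\ref{case:bipartite} without creating new irreducible structures. A careful check is also needed that the automated branching-number computation is genuinely re-run with $\Fmine{\cdot}$ in place of $\Fmina{\cdot}$: the larger editing branching vectors are what raise the base from $2.303$ to $4.329$, so the worst case must be identified afresh among all enumerated local configurations rather than assumed to coincide with the deletion worst case.
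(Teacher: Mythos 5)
Your proposal is correct and takes essentially the same route as the paper: the same modified rules (Rules~(B\ref{rule:Ps-P}) and~(B\ref{rule:Ps-I}) firing also on the exception graphs), the same resulting decomposition (at most five vertices, disconnected, co-disconnected, or spider) with the polynomial-time base case handled via Liu et al., the same branching numbers, and a termination argument via $|F|\ge 1$ that correctly replaces the edge-count bound (which fails under editing). One small sharpening: with the exception-free Rule~(B\ref{rule:Ps-I}), the branch $I(A)\neq\emptyset$ is killed outright (so $I(A)=\emptyset$, not merely $|I(A)|\le 1$ as you wrote), and this is precisely what lowers the base-case threshold to five vertices as your statement of the decomposition requires.
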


\bibliographystyle{abbrv}
\bibliography{cograph}

\end{document}